\newcommand\say[1]{`#1'} 
\definecolor{CiteGreen}{RGB}{0,160,0} 
\newtheorem{theorem}{Theorem}
\newtheorem*{theorem*}{Theorem}
\definecolor{darkgreen}{RGB}{0, 100, 0}
\definecolor{midnightblue}{RGB}{124, 185, 232}
\tikzset{
  startstop/.style={
    rectangle, 
    rounded corners,
    minimum width=1.9cm, 
    minimum height=1cm,
    align=center, 
    draw=black, 
    fill=midnightblue!90
    },
    arrow/.style={thick,->,>=stealth},
    }
\title{A neural network model for solvency calculations in life insurance}
\author{Lucio Fernandez-Arjona \\
University of Zurich\\
}
\date{December 2019}
\begin{document}

\maketitle

\begin{abstract}
Insurance companies make extensive use of Monte Carlo simulations in their capital and solvency models. To overcome the computational problems associated with Monte Carlo simulations, most large life insurance companies use proxy models such as replicating portfolios. 

In this paper, we present an example based on a variable annuity guarantee, showing the main challenges faced by practitioners in the construction of replicating portfolios: the feature engineering step and subsequent basis function selection problem. 

We describe how neural networks can be used as a proxy model and how to apply risk-neutral pricing on a neural network to integrate such a model into a market risk framework. The proposed model naturally solves the feature engineering and feature selection problems of replicating portfolios. 
\\

\noindent \textbf{Keywords:} Economic Capital; Swiss Solvency Test; Solvency II; Neural Networks; Nested Monte Carlo; Replicating Portfolios.\newline
\end{abstract}

\vspace*{\fill}
  {\footnotesize Correspondence to: L.Fernandez-Arjona, University of Zurich, \texttt{lucio.fernandez.arjona@business.uzh.ch}}
  
\section{Introduction}

Insurance companies rely on financial models for quantitative risk management. These risk models should be accurate and fast in terms of the calculation of risk figures such that the rapid pace of market environments is matched. Life insurance companies face the challenge of having to quickly revalue their liabilities under economic stress scenarios based on market-consistent valuation principles.

Typically, insurance liabilities exhibit features, such as options and guarantees, comparable to standard financial products. Unlike for the latter, there are generally
no closed-form formulas for the valuation of the former. Because of this, the use of numerical methods, such as Monte Carlo techniques, becomes inevitable. However, the choice of the specific technique to be used is the key factor in the accuracy and speed of the calculation of risk figures.

Nested Monte Carlo, a straightforward approach to this problem, is computationally burdensome to the point of being infeasible in most cases. Other standard techniques to approach this problem are the least squares Monte Carlo approach (LSMC) and the replicating portfolio approach (RP). These techniques, and their advantages and disadvantages, will be described in more detail in the following sections.

The availability of accurate and fast valuation methods is of great interest to risk management practitioners in life insurance companies. In the last decade, machine learning models based on neural networks have emerged as the most accurate in many fields, among them image recognition, natural language understanding and robotics.
While not the first to apply neural networks to life insurance modelling, this paper incorporates the idea of risk-neutral valuation of neural networks to achieve higher accuracy than other existing models while remaining within a realistic computational budget.

A review of the literature shows that several papers work with machine learning techniques to address the problem of calculating the value and risk metrics of complex life insurance products. 

One family of methods starts with exact information about a small number of contracts, and then uses spatial interpolation to generate valuations for all contracts. For example, \cite{gan2015valuation} use a clustering technique for the selection of the representative contracts and then apply a functional data analysis technique, universal kriging, for the interpolation. \cite{hejazi2016neural} propose using a neural network model for the interpolation step instead of traditional interpolation techniques. These methods interpolate among policies for a fixed set of economic scenarios, hence reducing the computational burden of calculating the value of the insurance contract. They address the valuation problem but on their own they are not enough to solve the computational problem of calculating risk metrics.

The effective calculation of risk metrics is addressed by a family of methods that take portfolio valuations (or cash flows) as inputs and use regression methods to construct a model capable of producing the real-world distribution of values of the insurance portfolio. These methods can be classified in two groups, \say{regress-now} and \say{regress-later}, a classification first proposed by \cite{glasserman2002simulation}. Regress-later methods perform better than regress-now methods, as shown by \cite{beutner2013fast}. Examples of both types of methods in this family can be found in \cite{beutner2016theory} and \cite{castellani2018investigation}. The former presents a regress-later model based on an orthogonal basis of piece-wise linear functions, and the latter a regress-now model based on neural networks. LSMC and replicating portfolios belong to this overarching family of methods, and any machine learning approach in this family can be a direct replacement for them.

In this paper we present a regress-later model based on neural networks, including a closed-form formula for its risk-neutral valuation, and compare it to a benchmark implementation of replicating portfolios. We focus on this comparison in order to answer a question of relevance to practitioners: \say{Can neural networks provide better results than existing methods used in the industry?} A formal mathematical treatment of the methods involved can be found in the already mentioned \cite{beutner2013fast} (regress-later vs regress-now methods), \cite{natolski2014mathematical}, and \cite{cambou2018replicating} (RPs) and \cite{bauer2010solvency} (LSMC).

We have not found in the literature previous work on comparing replicating portfolios (that is, regress-later method with financial instruments as basis functions) to other approximation techniques. There are, however, several papers that compare LSMC approaches: \cite{bauer2010solvency} present a comparison between LSMC and nested Monte Carlo, and \cite{pelsser2016difference} present a comparison between LSMC regress-now and LSMC regress-later. 

Against the background described above, this paper's contribution is threefold:
\begin{itemize}

    \item it presents a reproducible replicating portfolio approach suitable for benchmarking in a research context,
    
    \item it introduces a risk-neutral valuation formula for a class of neural networks with multivariate normally distributed inputs (such as discrete time Brownian motion processes),
    
    \item it builds a regress-later methodology based on neural networks and compares the quality of the economic capital calculations between this method and the replicating portfolio approach.
\end{itemize}

The neural network model improves on those in use in the industry and some of those presented in the literature. In comparison with the (regress-now) neural network model in \cite{castellani2018investigation}, our model is based on a regress-later approach, which---as described in the literature---provides more accurate results than regress-now models. In comparison with the regress-later approach in \cite{beutner2016theory}, the neural network approach allows us to avoid having to define the arbitrary $A_T(Z)$ (dimensionality reduction function) and the hypercube grid. Under a neural network model, those parameters are data driven and determined as part of the optimization. 

The rest of the paper is structured as follows: Section \ref{sec:capital calculations} describes the solvency capital calculation problem in detail, together with possible solutions based on nested Monte Carlo and proxy models. Section \ref{sec:math} summarizes the mathematical framework common to all regression models (both regress-now and regress-later), Section \ref{sec:nn} describes the proposed neural network approach, and presents the risk-neutral valuation of a class of neural networks. Section \ref{sec:qualitative} discusses important qualitative aspects of the models, such as model complexity and feature engineering. Finally, Section \ref{sec:experiments} describes the numerical experiments and presents the results.

\section{Solvency capital calculation problem in life insurance} \label{sec:capital calculations}
Solvency regimes---including Solvency II or Swiss Solvency Test---require companies to hold capital in excess of a legal minimum that is based on the amount necessary to remain solvent with high confidence in a one-year period.

The above implies the determination of the distribution of the value of the asset--liability portfolio at the end of the one-year period. We call the value of the asset--liability portfolio $V_t$, and therefore $V_1$ is the value at the end of the first year. The solvency capital requirement is then determined relative to a risk metric applied to this distribution:
\begin{itemize}

\item Value at risk (Solvency II) 
        \[
         \mathit{VaR}_{\alpha }(V_1) =-\inf {\big \{}v :F_{V_1}(v)>\alpha {\big \}}=F_{-V_1}^{-1}(1-\alpha ) .
        \]
\item Expected shortfall (Swiss Solvency Test)
\[
         \mathit{ES}_{\alpha }(V_1) = -{\frac {1}{\alpha }}\int _{0}^{\alpha }{\mbox{VaR}}_{\gamma }(V_1)d\gamma.
  \]      
\end{itemize}

If $F_{V}^{-1}$ is not known (and this is usually the case), then we must simulate $\{V_1^{(i)}\}_{i=1:M}$ and then calculate the risk metric on the empirical (simulated) distribution. These $M$ samples are referred to as real-world (or \say{natural}) simulations. 

Having described how risk calculations usually depend on a Monte Carlo sampling of $V_1$, $\{V_1^{(i)}\}_{i=1:M}$, we now focus on the calculation of each individual $V_1^{(i)}$. These represent the value of the asset--liability portfolio at the end of the one-year period. The valuation must be carried out on a market-consistent basis, which means applying risk-neutral valuation:
    \begin{equation} \label{eq:riskneutral}
    V_t = E_t^{\mathbb{Q}}\big[\sum_{\tau > t} \mathit{CF}_{\tau}(X)\big].
    \end{equation}
    
In the formula above the value of the portfolio is the expectation over the risk-neutral measure $\mathbb{Q}$ of future discounted cash flows, $CF_{\tau}(X)$. These cash flows depend on a set of economic variables $X$. In turn, $X$ depends on a smaller set of normal random drivers $\xi$---that is, $X=X(\xi)$. This implies that there is a $CF'_{\tau}(\xi)$ such that $CF' = CF \circ X$.

When calculating $V_1$ for solvency capital purposes, the real-world simulations determine an empirical distribution of $X_{0:1}$ ($X$ between 0 and 1), and for each sample in such distribution there is a risk-neutral distribution of $X$ after $t=1$---which we call $X_{1:T}$---over which $V_1$ must be calculated.

For simple products the calculation of $V_1$ is straightforward. However, products that lack a closed form formula, $\{V_1^{(i)}\}_{i=1:M}$ must be approximated by some $\{\widehat{V}_1^{(i)}\}_{i=1:M}$. Most complex products fall under this case.

\subsection{Nested Monte Carlo}
A simple solution for approximating $V_1$ is to apply a Monte Carlo approach:
    \[ 
     \widehat{V}_t = V_{\mathit{MC}} = \frac{1}{N} \sum_{j=1}^{N} \sum_{\tau > t} \mathit{CF}_{\tau}(X_{t:T}^{(j)} | X_{0:t}).
    \]
    
This formula implies taking $N$ samples and calculating the average of the discounted simulated cash flows. Since this must be done for each real-world simulation $i$ to be able to construct $\{V_1^{(i)}\}_{i=1:M}$, the full simulated distribution is given by
\[
\{\widehat{V}_t^{(i)}\}_{i=1:M} = \Big\{\frac{1}{N} \sum_{j=1}^{N} \sum_{\tau > t} \mathit{CF}_{\tau}(X_{t:T}^{(j)} | X^{(i)}_{0:t})\Big\}_{i=1:M}.
\]

The set of risk-neutral scenarios are called the inner scenarios because they are constructed for each outer scenario $i$ to estimate the value of the portfolio conditional on the information at time 1. Since a total of $M \times N$ simulations are required, a nested Monte Carlo approach is usually infeasible. Most insurers, therefore, use other approximation methods for $\widehat{V}_1$, the most popular being  \say{least squares Monte Carlo} (LSMC) and \say{replicating portfolios} (RPs). Both these methods are based on a regression approach. For an analysis and comparison of nested Monte Carlo to regression-based methods, the reader is referred to \cite{broadie2015risk}.

\subsection{Alternatives to nested Monte Carlo}
Given the computational difficulties of nested Monte Carlo, many methods have been proposed in the literature, some of which are in place in the industry. With regard to those in place, it is important to note that none of these proxy models replace the original, full insurance cash flow model of the asset--liability portfolio. These methods focus on allowing the solvency capital to be calculated with fewer executions of that model.

LSMC, originally introduced for pricing American style derivatives by \cite{LongstaffSchwartz}, is used to reduce the number of necessary risk-neutral simulations by finding a polynomial approximation of the portfolio value as a
function of the risk drivers. The coefficients of the polynomial expansion are obtained from a regression against a reduced-size nested Monte Carlo. LSMC is usually applied in the industry as, in the terminology of \cite{glasserman2002simulation}, a \say{regress-now} approach, as described in \cite{bauer2010solvency}. However, polynomial approximations do not need to be restricted to \say{regress-now} applications. 

The replicating portfolio approach is based on running a regression against the portfolio cash flows, but instead of polynomials the model uses a set of financial securities as basis functions. The problem is formulated as a linear optimization problem ($L_1$ or $L_2$) where the objective is to minimize the differences between the cash flows of the asset--liability portfolio and the replicating portfolio. The output is a portfolio of financial instruments that reproduces the payout of the life insurance portfolio as closely as possible. For reference, \cite{natolski2014mathematical} analyse in detail some of the popular approaches to constructing replicating portfolios. \cite{vidal2009replication} and \cite{chen2012cashflow} look at replicating portfolios from a more practical point of view and \cite{fernandez2016large} provide a description of a real-world implementation in the insurance industry.

In this paper we present a method based on neural networks that combines the strengths of LSMC and replicating portfolios while providing higher accuracy in a setting with a realistic amount of inputs and computing power. We stress this last point since it is common for studies on neural networks to provide results based on millions of input samples, which would not be practical in the real world.

\section{Mathematical formulation} \label{sec:math}
Both regress-now and regress-later models are estimators for the value of the asset--liability portfolio, $\widehat{V}_t$. 

Regress-now models approximate the value function by estimating the coefficients $\{w_k\}$ in 
\[ \widehat{V}_{t}^{(i)}(X) = \sum_k w_k \phi_k(X_{0:t}^{(i)}).\]

This estimation is done via regression, minimizing the squared error
\[\sum_{i=1}^{m} \Big(\sum_k w_k \phi_k(X_{0:t}^{(i)}) - \tilde{V}_{MC}^{(i)}\Big)^2, \]
where $\tilde{V}_{MC}^{(i)}$ differs from $V_{MC}^{(i)}$ in that it is calculated with a very low number $n$ of inner simulations, instead of using $N$ simulations as in nested Monte Carlo. It is also worth noting that the regression is performed over $m<<M$ outer scenarios. LSMC uses a polynomial basis for $\{\phi_k\}$.
Once the model has been calibrated, it can be used to \say{predict} (in its machine learning sense) all $M$ scenarios, which were not part of the training data.

Regress-later models approximate the value function by estimating the coefficients $\{w_k\}$ necessary to build the following estimator: 

\begin{equation} \begin{split} \label{eq:regresslater}
    \widehat{V}_t^{(i)} &=  E_t^{\mathbb{Q}}\big[\sum_{\tau > t} \widehat{CF}_{\tau}(X)\big] \\ 
    &= E_t^{\mathbb{Q}}\big[\sum_{\tau > t} \sum_k w_{k,\tau} \phi_{k,\tau}(X_{0:\tau})\big] \\ 
    &=\sum_{\tau > t} \sum_k w_{k,\tau} E_t^{\mathbb{Q}}\big[ \phi_{k , \tau}(X^{(i)})\big].
\end{split} \end{equation}

The equation above shows that, while more accurate than regress-now models, regress-later models impose an additional requirement: to be able to calculate $E_t^{\mathbb{Q}}[ \phi_{k , \tau}]$. In the best case, there is a closed-form formula. In the worst case, it can be done via Monte Carlo, but the computation cost will partially or completely offset the computation gains of avoiding nested Monte Carlo on the full model.

This estimation is done via regression, minimizing the error

\[\sum_{i=1}^{m} \sum_{j=1}^{n} \sum_{\tau > t} \Big| \sum_k w_{k,\tau} \phi_{k,\tau}(X_{t:T}^{(j)} | X^{(i)}_{0:t})) - \mathit{CF}_{\tau}(X_{t:T}^{(j)} | X^{(i)}_{0:t})\Big|^p, \]
where (as in the LSMC case) $m<<M$ but (unlike in that case) $n$ could be as large as $N$ if required.

Typically this regression is done minimizing squared errors ($p=2$) but some companies use absolute errors ($p=1$). The approximation is based on a linear regression at each $\tau$ of $CF_{\tau}(\cdot)$ against $\{\phi_{k, \tau}(\cdot)\}$, the cash functions of a set of financial instruments (bonds, swaps, equity options). This is the usual case in the industry, although some older implementations used grouped cash flows (in time buckets) and some papers, like \cite{cambou2018replicating}, present the topic in terms of terminal values (that is, all time steps grouped).

The neural network approach proposed in this paper will take the form of a regress-later method, but using neural network structure for basis functions and performing a non-linear optimization to find its parameters.

\section{A neural network approach} \label{sec:nn}
Artificial neural networks, more commonly referred to as neural networks, constitute a broad class of models. Among them, the simplest is the single-layer perceptron, which we use for our model. 

The single-layer perceptron, a type of feed-forward network, is a collection of connected units or nodes arranged in layers. Nodes of one layer connect only to nodes of the immediately preceding and immediately following layers. They are fully connected, with every node in one layer connecting to every node in the next layer. The layer that receives external data is the input layer. The layer that produces the ultimate result is the output layer. In between there is one hidden layer. Each node is a simple non-linear function $\phi(\cdot)$ applied to a linear combination of its inputs---that is, those nodes to which is connected. An example of this architecture is shown in Figure \ref{fig:nn_structure}, for three inputs and a hidden layer with a width of four nodes.

The single-layer perceptron is a universal function approximator, as proven by the universal approximation theorem (\cite{hornik1991approximation}).

\begin{figure}
    \begin{center}
    \includegraphics[trim={4cm 11cm 4cm 10.5cm},clip]{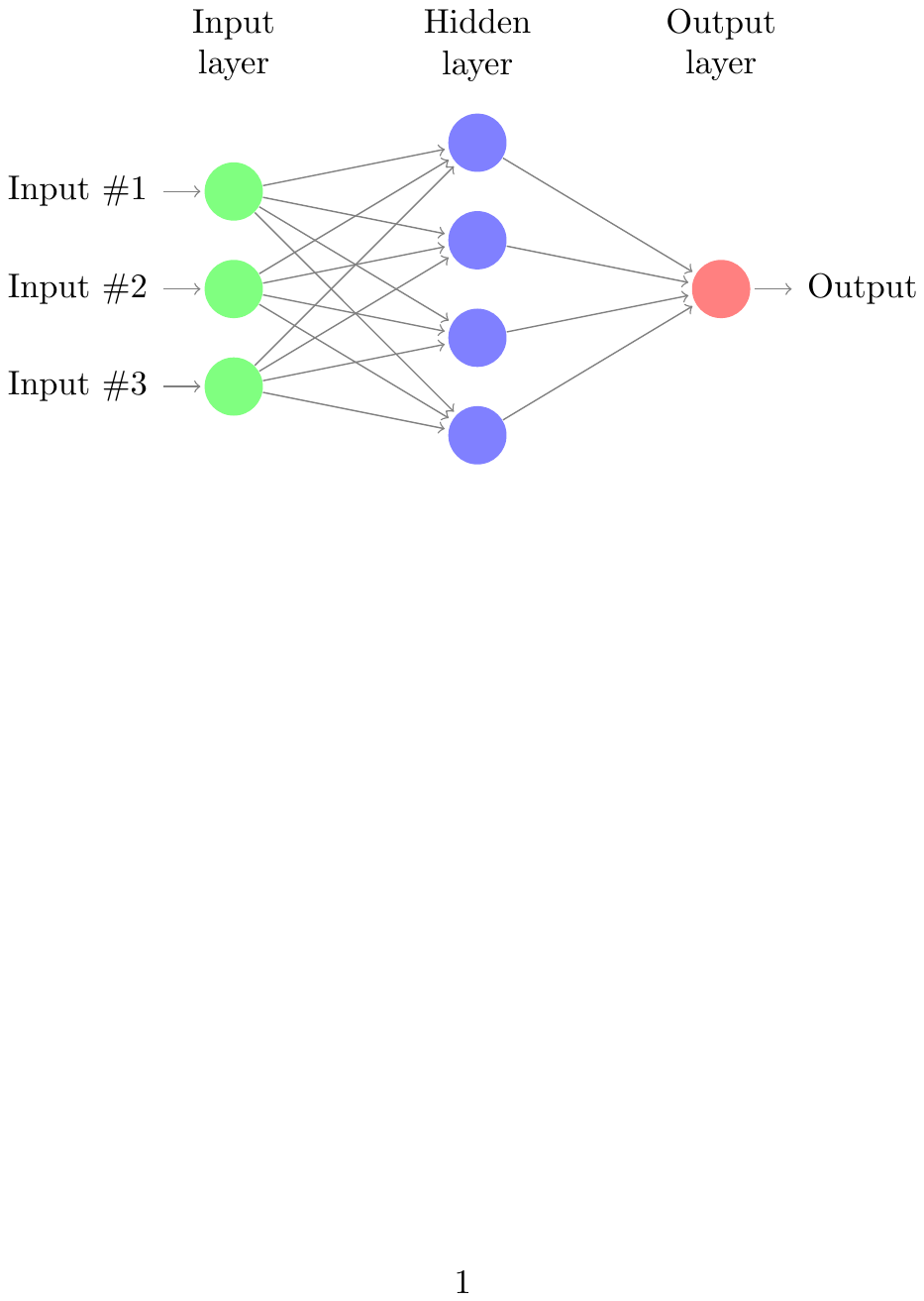}
    \end{center}
    \caption{Single-layer perceptron structure}
    \label{fig:nn_structure}
\end{figure}

The first neural network model that we present is a direct equivalent of Equation \eqref{eq:regresslater} for the case of single-layer perceptron:
\[\widehat{V}_t = E_t^{\mathbb{Q}}\big[\sum_{\tau > t} \sum_k w_{k,\tau} \phi_{k,\tau}(X_{0:\tau})\big]= E_t^{\mathbb{Q}}\big[\sum_{\tau > t} \sum_k w_{k,\tau} \phi(\mathbf{v}^\intercal_{k,\tau}  X_{0:\tau})\big], \]
where $\phi$ is the activation function, $w_{k,\tau}$ the weights of the liner (output) layer, and $\mathbf{v}_{k,\tau}$ the weights of the hidden layer. Since $X_0$ is a constant (it describes the initial conditions of the simulation) there is no need for an explicit bias term since the first component of $\mathbf{v}$ will be the constant term.

In this first neural network model, we do not know the distribution of $X$, which contains arbitrary economic variables, and therefore the risk-neutral expectation cannot be calculated in closed form. When this expectation is required, as in Section \ref{sec:results}, we present results for this model based on Monte Carlo valuation. Since this model's input is $X$---the vector of economic variables---we will refer to this model as the \say{nn econ} when showing results.

The second neural network model is more interesting because it allows a closed form valuation. When discussing Equation \eqref{eq:riskneutral}, we pointed out that $X$ is modelled as a function of a smaller set of normal random drivers $\xi$---that is, $X=X(\xi)$ and $\dim(\xi) < \dim(X)$. Using this fact, we express our second model as 
\[\widehat{V}_t = E_t^{\mathbb{Q}}\big[\sum_{\tau > t} \sum_k w_{k,\tau} \phi(\mathbf{v}^\intercal_{k,\tau}  \xi_{0:\tau})\big]; \]
that is, we use $\xi$ as input instead of $X$ (for symmetry, we keep a $\xi_o$ component, which is not random but constant in order to provide a bias term). 

Using $\xi$ as input brings two advantages and one disadvantage. On the positive side, the dimensionality of $\xi$ is smaller than that of $X$ ($dim(\xi) < dim(X)$) and its components are uncorrelated with each other ($\sigma(\xi_i, \xi_j) = \delta_{ij}$ but $\sigma(X_i, X_j) \neq \delta_{ij}$). This makes solving the non-linear optimization problem much easier, requiring fewer samples and shorter training time to converge. Most importantly, the normal distribution of $\xi$ allows a closed-form solution to the risk-neutral expectation as described in Section \ref{sec:nnclosedform}. On the negative side, $CF'(\xi)=CF \circ X(\xi)$ is a more complex function than $CF(X)$ so---all else being equal---we would expect to need a bigger neural network, more samples, and a longer training time. Given these advantages and disadvantages, it is not possible to tell analytically which model will show higher accuracy. Both will, therefore, be tested.
 Since this model's input is $\xi$---the vector of normal random variables---we will refer to this model as the \say{nn rand} when presenting results.

\subsection{The risk-neutral value of a neural network} \label{sec:nnclosedform}
We have claimed that the second neural network model has a closed-form solution to the risk-neutral expectation. In this section we will show its derivation, which, as far as we know, has not appeared before in the literature.

\begin{theorem}\label{th:nn}
For a normally distributed $\xi$, the time-t risk-neutral expectation,
\[ E_t^{\mathbb{Q}}\big[w_{0,\tau}+\sum_{k=1}^K w_{k,\tau} \phi(\mathbf{v}^\intercal_{k,\tau}  \xi_{0:\tau})\big], \]
of a single-layer network with $K$ hidden nodes and a ReLu activation function $\phi$ that models the cash-flows at time $\tau$ is given by

\begin{equation}\label{eq:nn_expectation}
     w_{0,\tau}+ \sum_{k=1}^K w_{k,\tau}  \frac{1}{2}\left[{_t}{\mu_{k,\tau}} + {_t}\sigma_{k,\tau}\sqrt{\frac{2}{\pi}}\exp\left({-\frac{{_t}\mu_{k,\tau}^2}{2{_t}\sigma_{k,\tau}^2}}\right)+ {_t}\mu_{k,\tau}\left(1-2\Phi\left(-\frac{{_t}\mu_{k,\tau}}{{_t}\sigma_{k,\tau}}\right)\right)\right],
\end{equation}
       \begin{equation*}\begin{split}
       {_t}{\mu_{k,\tau}} &=\sum_{i=0}^{\min(t, \tau)} \mathbf{v}^\intercal_{i, k,\tau}  \xi_{i}, \\
       {_t}\sigma_{k,\tau}^{2} &= \sum_{i=t+1}^{\tau} \lVert\mathbf{v}^\intercal_{i, k,\tau}\lVert^2 .
    \end{split} \end{equation*}
\end{theorem}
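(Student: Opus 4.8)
The plan is to collapse the whole expectation onto a single scalar Gaussian integral, which the ReLU makes explicitly solvable. First I would use linearity of the conditional expectation $E_t^{\mathbb{Q}}[\cdot]$ to move it inside the sum: the constant $w_{0,\tau}$ and the output weights $w_{k,\tau}$ pass straight through, so the claim reduces to evaluating
\[ E_t^{\mathbb{Q}}\big[\phi(\mathbf{v}^\intercal_{k,\tau}\xi_{0:\tau})\big] \]
for a single hidden node $k$, after which I reassemble the sum to recover the prefactor $w_{0,\tau}$ and the $\sum_k w_{k,\tau}(\cdots)$ structure.

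Next I would show that the pre-activation $Y:=\mathbf{v}^\intercal_{k,\tau}\xi_{0:\tau}=\sum_{i=0}^{\tau}\mathbf{v}^\intercal_{i,k,\tau}\xi_i$ is, conditionally on the information available at time $t$, a univariate normal variable. This is exactly where the normality assumption on $\xi$ earns its keep: the components of $\xi$ are jointly normal and uncorrelated, hence independent, so I would split the sum into the increments already realized by time $t$ (indices $i=0,\dots,\min(t,\tau)$, which are constants under $E_t^{\mathbb{Q}}$, the index $i=0$ being the constant bias term) and the genuinely future increments (indices $i=t+1,\dots,\tau$, which stay standard normal under $\mathbb{Q}$). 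The realized part is precisely the conditional mean ${_t}\mu_{k,\tau}$, and because each future $\xi_i$ has identity covariance the conditional variance is the sum of the term variances, $\sum_{i=t+1}^{\tau}\lVert\mathbf{v}_{i,k,\tau}\rVert^2={_t}\sigma_{k,\tau}^2$. Thus $Y\sim N({_t}\mu_{k,\tau},{_t}\sigma_{k,\tau}^2)$ conditionally, and the task becomes computing $E[\max(0,Y)]$ for a normal $Y$ with known mean $\mu$ and variance $\sigma^2$.

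The computational heart is this last Gaussian expectation, and the cleanest route to the stated form is the identity $\phi(Y)=\max(0,Y)=\tfrac12\big(Y+|Y|\big)$, which gives $E[\max(0,Y)]=\tfrac12\big(\mu+E|Y|\big)$. The term $E|Y|$ is the mean of a folded normal; I would obtain it by standardizing $Y=\mu+\sigma Z$, splitting the integral of $|Y|$ at the origin, and evaluating the two resulting Gaussian integrals, which produce the density term $\sigma\sqrt{2/\pi}\exp\left(-\frac{\mu^2}{2\sigma^2}\right)$ and the tail term $\mu(1-2\Phi(-\mu/\sigma))$. Substituting back reproduces the bracket of Equation~\eqref{eq:nn_expectation} term by term, and reinstating the subscripts $k,\tau$ and summing over $k$ finishes the proof. (Equivalently, one can compute $E[Y\,\mathbf{1}_{\{Y>0\}}]$ directly, obtaining $\mu\Phi(\mu/\sigma)+\sigma\,\varphi(\mu/\sigma)$ with $\varphi$ the standard normal density, then rearrange via $\Phi(-x)=1-\Phi(x)$; both routes give the same expression.)

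I expect no genuine obstacle, since this is essentially a Black--Scholes-type computation. The two points requiring care are the conditioning argument of the second step---making sure it is the uncorrelated-plus-jointly-normal assumption that upgrades the mean/variance decomposition to \emph{conditional normality}, and that the bias index together with every realized increment lands in the mean while only the truly future increments feed the variance---and the bookkeeping in the folded-normal integral that must yield precisely the three-term bracket of \eqref{eq:nn_expectation} rather than some algebraically equivalent but differently grouped form.
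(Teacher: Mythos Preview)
Your proposal is correct and follows essentially the same route as the paper: linearity reduces the problem to a single node, the identity $\max(0,Y)=\tfrac12(Y+|Y|)$ splits the expectation, conditional normality of $Y=\mathbf{v}^\intercal_{k,\tau}\xi_{0:\tau}$ supplies ${_t}\mu_{k,\tau}$ and ${_t}\sigma_{k,\tau}^2$, and the folded-normal mean gives the bracketed expression. If anything, your write-up is slightly more explicit than the paper's about why the conditioning yields a Gaussian (independence of the future increments from the realized ones), which is a welcome addition.
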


\begin{proof}
Starting from the full network
\begin{equation} 
     E_t^{\mathbb{Q}}\big[w_{0,\tau}+ \sum_{k=1}^K w_{k,\tau} \phi(\mathbf{v}^\intercal_{k,\tau}  \xi_{0:\tau})\big]
   = w_{0,\tau}+ \sum_{k=1}^K w_{k,\tau} E_t^{\mathbb{Q}}\big[\phi(\mathbf{v}^\intercal_{k,\tau}\xi_{0:\tau})\big],
\end{equation}
and then focusing on the expectation of a single node, we obtain
\begin{equation} \begin{split}
E_t^{\mathbb{Q}}\big[\phi(\mathbf{v}^\intercal_{k,\tau} \xi_{0:\tau})\big]&=
E_t^{\mathbb{Q}}\big[\max(\mathbf{v}^\intercal_{k,\tau} \xi_{0:\tau}, 0) \big] \\
&= E_t^{\mathbb{Q}}\big[\frac{\mathbf{v}^\intercal_{k,\tau} \xi_{0:\tau} + |\mathbf{v}^\intercal_{k,\tau}  \xi_{0:\tau}|}{2} \big] \\
&=\frac{1}{2}\Big[ E_t^{\mathbb{Q}}\big[\mathbf{v}^\intercal_{k,\tau}\xi_{0:\tau}\big] +E_t^{\mathbb{Q}}\big[ |\mathbf{v}^\intercal_{k,\tau} \xi_{0:\tau} | \big]\Big].
\end{split} \end{equation}

Since $\mathbf{v}^\intercal_{k,\tau}\xi_{0:\tau}$ is normally distributed, it is defined by its mean and standard deviation, $\mu_{k,\tau}$ and $\sigma_{k,\tau}$. Its conditional expectation at time $t$ is 
\[
E_t^{\mathbb{Q}}\big[\mathbf{v}^\intercal_{k,\tau} \xi_{0:\tau}\big]=  {_t}{\mu_{k,\tau}} =\sum_{i=0}^{\min(t, \tau)} \mathbf{v}^\intercal_{i, k,\tau}  \xi_{i},
\]

and its conditional variance at time t is
\[{_t}\sigma_{k,\tau}^{2} = \sum_{i=t+1}^{\tau} \lVert\mathbf{v}^\intercal_{i, k,\tau}\lVert^2 .
\]
If $\tau \leq t$, then its conditional variance is 0.

Since $\mathbf{v}^\intercal_{k,\tau}\xi_{0:\tau}$ is normally distributed, $|\mathbf{v}^\intercal_{k,\tau}  \xi_{0:\tau}|$ follows a folded normal distribution, with conditional expectation at time $t$

\[
E_t^{\mathbb{Q}}\big[|\mathbf{v}^\intercal_{k,\tau} \xi_{0:\tau}|\big]=  {_t}\sigma_{k,\tau}\sqrt{\frac{2}{\pi}}\exp({-\frac{{_t}\mu_{k,\tau}^2}{2{_t}\sigma_{k,\tau}^2}})+ {_t}\mu_{k,\tau}\big(1-2\Phi(-\frac{{_t}\mu_{k,\tau}}{{_t}\sigma_{k,\tau}})\big).\]
Therefore, the expectation of each hidden node is

\[E_t^{\mathbb{Q}}\big[\phi(\mathbf{v}^\intercal_{k,\tau} \xi_{0:\tau})\big]=
\frac{1}{2}\left[{_t}{\mu_{k,\tau}} + {_t}\sigma_{k,\tau}\sqrt{\frac{2}{\pi}}\exp\left({-\frac{{_t}\mu_{k,\tau}^2}{2{_t}\sigma_{k,\tau}^2}}\right)+ {_t}\mu_{k,\tau}\left(1-2\Phi\left(-\frac{{_t}\mu_{k,\tau}}{{_t}\sigma_{k,\tau}}\right)\right)\right],
\]

which leads to the expectation of the full network in \eqref{eq:nn_expectation}.

\end{proof}

\section{Qualitative comparison} \label{sec:qualitative}
Besides the quantitative experiments, whose results we show in Section \ref{sec:results}, there are some qualitative differences of importance to practitioners. One of them is the complexity of the model, as measured by the number of modules and equations required to implement it. A second one is the amount of expert judgement required in the feature engineering. A third one is how to prevent overfitting of the training data.

\subsection{Model complexity}

Figure \ref{fig:model_structure} presents a comparison of the module structure. We describe below the sequence of calculations required for the training phase, the prediction phase being very similar with the exception that cash flows are replaced by prices (closed-form or Monte Carlo, depending on the model and the instruments used).

All models require a random number generator. Replicating portfolios and the first neural network model require the generation of the economic variables $X$. In the insurance industry, these two modules are usually grouped into the so-called economic scenario generator (ESG). The second neural network model does not need $X$ for training or prediction.  Finally, the replicating portfolio model requires the generation of instrument cash flows in order to produce the inputs to the optimization problem. Despite their reputation for complexity, neural networks actually lead to a simpler model, at least when measured by the number of equations and components necessary for their implementation.

\begin{figure}
    \begin{center}
    \includegraphics[trim={4cm 11cm 4cm 10.5cm},clip]{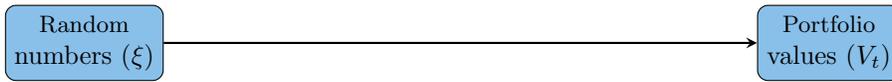}
    \end{center}
    \caption{Comparison of model structures}
    \label{fig:model_structure}
\end{figure}


\subsection{Feature engineering }
Any replicating portfolio model requires a substantial degree of expert judgement in deciding which instrument cash flows to use in the model. When working with simple liabilities, the decision is not hard since the simplest instruments, such as bonds and equity forwards, will work well. As the liabilities grow in complexity, and if the most obvious derivatives (swaps, swaptions, European and Asian options) are not enough to capture the behaviour of the liabilities, the practitioner faces the extremely difficult task of figuring out which is the correct derivative to add to the existing mix. Since financial instruments as a whole do not form a structured basis of any meaningful space of functions, it is not possible to explore in a systematic way the set of all possible financial instruments until the best solution is found. Even worse, each attempt (adding of a new asset class) requires a substantial amount of implementation work before the results can be seen.
In contrast, neural networks provide certain guarantees of convergence by simply increasing the width of the hidden layer (adding more nodes). This guarantee is provided by the \say{universal representation theorem}, of which different versions exist (among them \cite{hornik1991approximation} and \cite{ hanin2017approximating}). At least for the classes of functions covered under these theorems, the search for better results is extremely straightforward: just one parameter that is a direct input in any neural network software library. No new equations need to be implemented, only one input change in the existing model in required.
Even when no new asset classes are required, the feature engineering problem in replicating portfolios still exists. Each asset class can have tens, hundreds, or thousands of individual instruments. For example, there is one zero coupon for each possible maturity. Even worse is the case of swaptions: having to choose from a combination of a set of maturities, tenors, and strikes leads to having to select thousands of instruments. All these basis elements must be completely calculated before being fed to the linear regression problem. In contrast, neural networks create their own features adaptively based on the data. This, of course, has the downside that it requires more training data than a comparable replicating portfolio model for which the expert judgement selection has been carried out correctly.

\subsection{Feature selection}
The reverse problem to not having enough financial instruments or not having asset classes that are complex enough is the problem of having too many instruments and asset classes. Feeding thousands of instruments to the regression problem runs the very real risk of overfitting the training data. This problem has not been entirely solved by practitioners in the industry, although the popularization of machine learning software libraries has allowed big improvements in recent years compared with the completely manual approach used five or ten years ago. In this paper, we solve this problem by using Lasso regression (\cite{tibshirani1996regression}) with the Bayesian information criterion to choose the regularization parameter (\cite{zou2007degrees}). For the neural network model, the constraints on model complexity are provided by selecting the layer width via cross-validation.

\section{Numerical experiments} \label{sec:experiments}
The goal of this section is to provide a quantitative comparison between a neural network model and a replicating portfolio model. It is important to clarify that there are many possible neural network models (many hyper-parameters and architectural choices) as well as many possible replicating portfolio models (many options of asset classes in the instrument universe and other architectural choices). Furthermore, the results presented here are for one particular scenario generator and one insurance product. It is not possible to generalize the conclusions drawn from these results to all models and all products. However, the choices made in this example are mainstream and robust. We would, therefore, expect the conclusions to extend to many real-world situations.

\subsection{Experimental setup}
In order to provide a comparison between replicating portfolios and another model, it is necessary to first have access to simulated cash flows of an insurance product. In the absence of open source libraries or data sets there has been no option but to build our own economic scenario generator and insurance model. Hoping that the data might help others in their research in the field, we have published the full data set in Mendeley Data (\cite{dataset}).

The high-level structure of the insurance model is described in Figure \ref{fig:insurance}. The first module is the normal random variable generator, which feeds the second module, the economic scenario generator. We use a combination of a one-factor Hull--White for the short rate and a geometric Brownian motion process for equity returns (in excess of risk-free rates). For the interest rate model, we use the formulas in \cite{glasserman2013monte}.

\begin{figure}[]

\begin{center}
    \includegraphics[trim={5cm 11cm 5cm 10.5cm},clip]{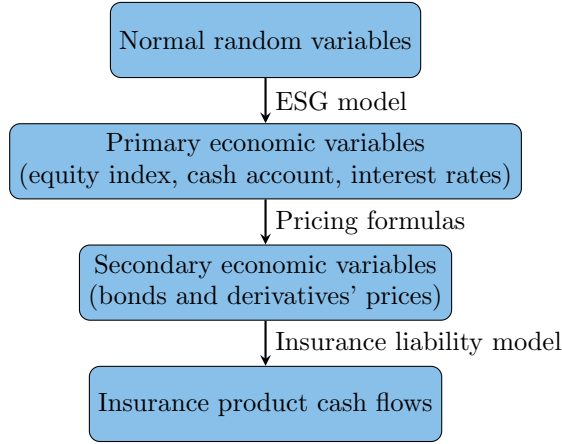}
    \end{center}
    \caption{Modular structure of the insurance model} \label{fig:insurance}
\end{figure}

In addition to the published data set, the full code of the scenario generator is available in open source form at \url{https://gitlab.com/luk-f-a/EsgLiL}. The generator is entirely written in Python. It uses NumPy (\cite{numpy}, \cite{van2011numpy}) for array operations, pandas (\cite{mckinney-proc-scipy-2010}) for data aggregation, scikit-learn (\cite{pedregosa2011scikit}) for linear regressions and neural networks training, and joblib for parallelization.

This scenario generator produces the evolution of the short rate, the cash account and the equity index. From these, we derive the rest of the asset prices: bonds, swaptions, and equity options. The last module is the insurance product, a variable annuity guarantee known as \say{guarantee return on death}. The simulation of the insurance product begins with a policyholder population of 1,000 customers of ages 30 to 70. The population evolves according to a Lee--Carter stochastic mortality model (we have used the same parameters as in \cite{leecarter}), which provides a trend and stochastic fluctuations. Each customer starts the simulation with an existing investment fund and a guaranteed level. In each time period, they pay a premium, which is used to buy assets; these assets are deposited in the fund.  The fund is rebalanced at each time period to maintain a target asset allocation.  The value of the fund is driven by the inflows from premiums and the market value changes are driven by the interest rate, equity, and real estate models. At each time step, a proportion of policy holders (as determined by the stochastic life table) die and the investment fund is paid out to the beneficiaries. If the investment fund were below the guaranteed amount, the company will additionally pay the difference between the fund value and the guaranteed amount.  The guaranteed amount is the simple sum of all the premiums paid over the life of the policy.  Over the course of the simulation the premiums paid increase the guaranteed amount for each policy.

All policies have the same maturity date, of forty years. Those policyholders alive at maturity receive the investment fund or the guaranteed value, whichever is the higher.

\subsection{Ground truth and benchmark value}
The quality comparison across methods requires establishing a \say{ground truth}---the values of the risk metrics calculated in an exact way. Given the lack of closed-form formulas this is not possible, so we settle for performing comparisons against a benchmark value calculated in the most reliable way possible. We do this by running an extremely large Monte Carlo simulation, with 100,000 outer simulations each with 10,000 inner simulations. 

The results of this calculation are shown in Table \ref{tab:groundtruth}. The centre column shows the mean present value, the risk-neutral value of the guarantee. To the left and right we see the expected shortfall and value at risk. Both are expressed as the change in monetary value in respect of the mean---that is as $V_1^{\mathit{tail}} - V_0$.

\begin{table}[h]
\centering
\small
\caption{Benchmark values} \label{tab:groundtruth}
    \begin{tabular}{@{}lrrrrr@{}}
    \toprule
    {} & \textbf{Left ES} & \textbf{Left VaR} &  \textbf{Mean} & \textbf{Right VaR} & \textbf{Right ES} \\
    \midrule
    \textbf{Large nMC} &     \num{-3888721.24} &       \num{-3317369.58} &   \num{-5474806.53} &       \num{2470920.12} &     \num{2740604.28} \\

    \bottomrule
    \end{tabular}
\end{table}

\subsection{Nested Monte Carlo and replicating portfolio benchmarks}
We use two established methods to benchmark our proposed model: nested Monte Carlo (nMC) and replicating portfolios (RPs). In all cases we set a computational budget of 10,000 training samples. In the case of nested Monte Carlo, this budget is split into 100 outer and 100 inner simulations. This split was selected for being the combination with the largest number of inner simulations possible (lowest bias) with a minimum of 100 outer simulations (to be able to calculate the 1 percent expected shortfall). For reference, note the large difference with the benchmark value calculation made with 100,000$\,\times \,$10,000 simulations.
Since the training samples are randomly drawn, each estimator is itself a random variable and we treat them as such. For each method, we calculate 100 macro-runs with new random numbers in order to obtain an empirical distribution of the estimators.

For the given sample budget of 10,000 simulations, the nested Monte Carlo estimator requires choosing the mix between outer and inner simulations. A higher number of outer simulations decreases variance, but a lower number of inner simulations increases bias. This effect is known as the bias--variance trade-off. Figure \ref{fig:nmc_results} shows this effect quite clearly using a violin plot, showing the increasing bias and decreasing variance from left to right. Each individual distribution is annotated with a percentage showing the mean absolute percentage error of the estimator. We describe this error measure in more detail in the next section. 

\begin{figure}[h]
    \centering
    \includegraphics[width=\columnwidth]{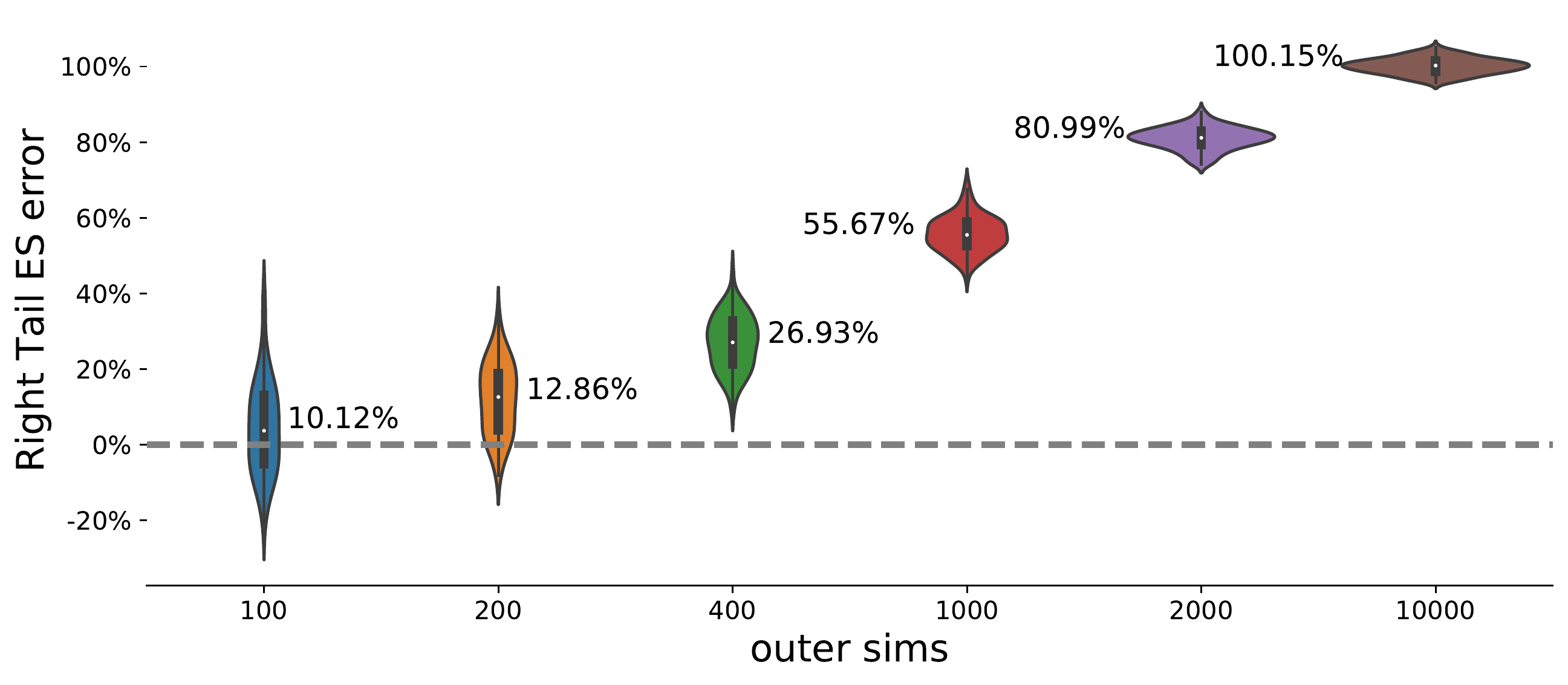}

    \caption{\small Error distribution of nMC estimators (fixed total simulations). Increasing outer simulation numbers lead to reduced variance but the corresponding decrease in inner simulation numbers increases bias.}\label{fig:nmc_results}
    
\end{figure}

Regarding the replicating portfolio benchmark, it is important to note that there is not \say{one} replicating portfolio method that we could directly apply. Replicating portfolios is a method with many variations and requires the choice of a range of hyper-parameters. Which explains the importance both of the description that follows and of our choices with regard to implementation. We have chosen parameters common in the industry except for one aspect. In the industry, replicating portfolio models are usually run many times with different parameters (in particular different asset classes) and one replicating portfolio is selected manually from those runs. Following that methodology would not allow us to create repeatable experiments or to form consistent distributions. We therefore use a Lasso regressor for the feature selection (instrument selection from the universe) and use the Bayesian information criterion to choose the regularization parameter. This is provided by the machine learning library scikit-learn through the LassoLarsIC class. As for the rest of the parameters, we list them in Table \ref{tab:rp}.

\begin{table}[]
\centering
\caption{Replicating portfolio parameters}\label{tab:rp}
\begin{tabular}{@{}ll@{}}
    \toprule
\textbf{Parameter}      & \textbf{Choice}\\
    \midrule
Loss function  & Squared errors                                                 \\
Asset universe & Bonds, cash, swaptions, equity index, equity European options \\
Time steps     & Full annual cash flow replicating (no grouping)                \\
Constraints    & None                                                           \\
Optimizer      & LassoLarsIC (scikit-learn)                              \\
    \bottomrule
\end{tabular}

\end{table}

\subsection{Quality measurement}
Following the arguments in \cite{willmott2005advantages} and \cite{chai2014root} we choose to focus on absolute errors rather than squared errors for model comparison since we do not need to penalize large outliers, and the errors are biased and most likely do not follow a normal distribution.  We therefore use mean absolute errors as a metric of mean model errors. Other moments are not quantitatively measured but the histograms of the distributions are presented for qualitative assessment.

The error to be considered is that of each risk metric (ES or VaR) and each tail (left tail or right tail), measured as a percentage error against the benchmark value for that metric. 

Based on the 100 macro-runs described in the previous section, we derive an empirical distribution for each estimator, $\{\rho_i\}_{i=1}^{100}$ where each individual sample is denoted as $\rho_i$. The mean absolute percentage error (MApE) is defined as
        \[ \frac{1}{100}\sum_i^R \big| \frac{\widehat{\rho_i}}{\rho} - 1 \big|, \] where $\rho$ is the benchmark value of the estimator.

All model results are mean-centred before risk metrics are calculated.

Figure \ref{fig:nmc_results} shows an example of the application of this error measurement for the various parameters in the nested Monte Carlo estimator. We can see that the estimator with 100 outer and 100 inner simulations has the lowest MApE. We therefore use this estimator in all subsequent comparison with other methods.

It is important to note that the benchmark value calculation is completely out-of-sample in respect of the training of any of the methods. Hence, all comparisons shown below are fully out-of-sample. 

\subsection{Results} \label{sec:results}
The results of the numerical experiments are presented in Table \ref{tab:results} and Figure \ref{fig:results}. The columns show the mean absolute percentage error for each of the four risk metrics and the mean present value. The rows show the errors of each different method. 

The errors of the replicating portfolio model are shown in the first row. In terms of mean absolute errors, this method yields the worst results of the group. Interestingly, the results are worse than using a nested Monte Carlo approach (second row). This is probably due to the very complex and non-linear nature of the guarantee function that is being replicated. Most likely the asset classes in the instrument universe are not complex enough, or not sufficiently path dependent to replicate the guarantee. 

While one might consider adding more asset classes, it becomes immediately clear that there is no obvious next step. This is a key disadvantage of replicating portfolios versus polynomial or neural network methods: the complexity of the approximation (richness of the approximating function) is not a parameter that one can easily modify. Which asset class to add next is an unknown, and there are hundreds of exotic derivatives that one might try. In order to test any of them, one must program the cash flows and valuation functions before any testing can be done. Therefore, the search for a better model takes much longer and success is not even guaranteed since one might not find the correct derivative.

By contrast, when working with a polynomial approximation one might increase the maximum degree of the polynomial, and with a neural network one might add layers or make each layer wider (adding more nodes). It only takes a few keystrokes to make a more powerful model. Polynomial and neural network models are guaranteed to succeed under certain conditions (smooth enough functions, sufficient samples, etc.), at least asymptotically. These models are, therefore, more amenable to automated solutions than are replicating portfolio models, which require an expert setup.

The last two rows in Table \ref{tab:results} present the results for each type of neural network models. Each model show better results than nested Monte Carlo or replicating portfolios. The first type (economic variable inputs) shows the best results of the group, better than the second type (random variable inputs).

\begin{table}
\centering
\small
\caption{Comparison of errors measured as MApEs} \label{tab:results}
    \begin{tabular}{@{}lrrrrr@{}}
    \toprule
    {} & \textbf{Left ES} & \textbf{Left VaR} &  \textbf{Mean} & \textbf{Right VaR} & \textbf{Right ES} \\
    \midrule
    \textbf{Rep. Portfolio MApE   } &     20\% &      23\% &   4\% &       46\% &      47\% \\
    \textbf{Nested MC MApE        } &     19\% &      14\% &   2\% &        8\% &      10\% \\
    \textbf{Neural net (econ) MApE} &      4\% &       2\% &   2\% &        7\% &       4\% \\
    \textbf{Neural net (rand) MApE} &     15\% &      11\% &   5\% &        4\% &       5\% \\
    \bottomrule
    \end{tabular}
\end{table}

Despite the advantages of the second type of neural network model, the data shows that 10,000 samples were not enough to learn the more complex function $CF \circ X$ sufficiently well to have higher accuracy than the first type of neural network model, which only had to learn the simpler function $CF$. In Figure \ref{fig:convergence} we can see that the second model (called \say{NNrand:10} when trained on 10,000 samples, \say{NNrand:50} when trained on 50,000 samples, etc.) does perform better than the first model (\say{NNecon}) once it is given a larger number of samples.
\begin{figure}
    \centering
    
    \includegraphics[width=\columnwidth]{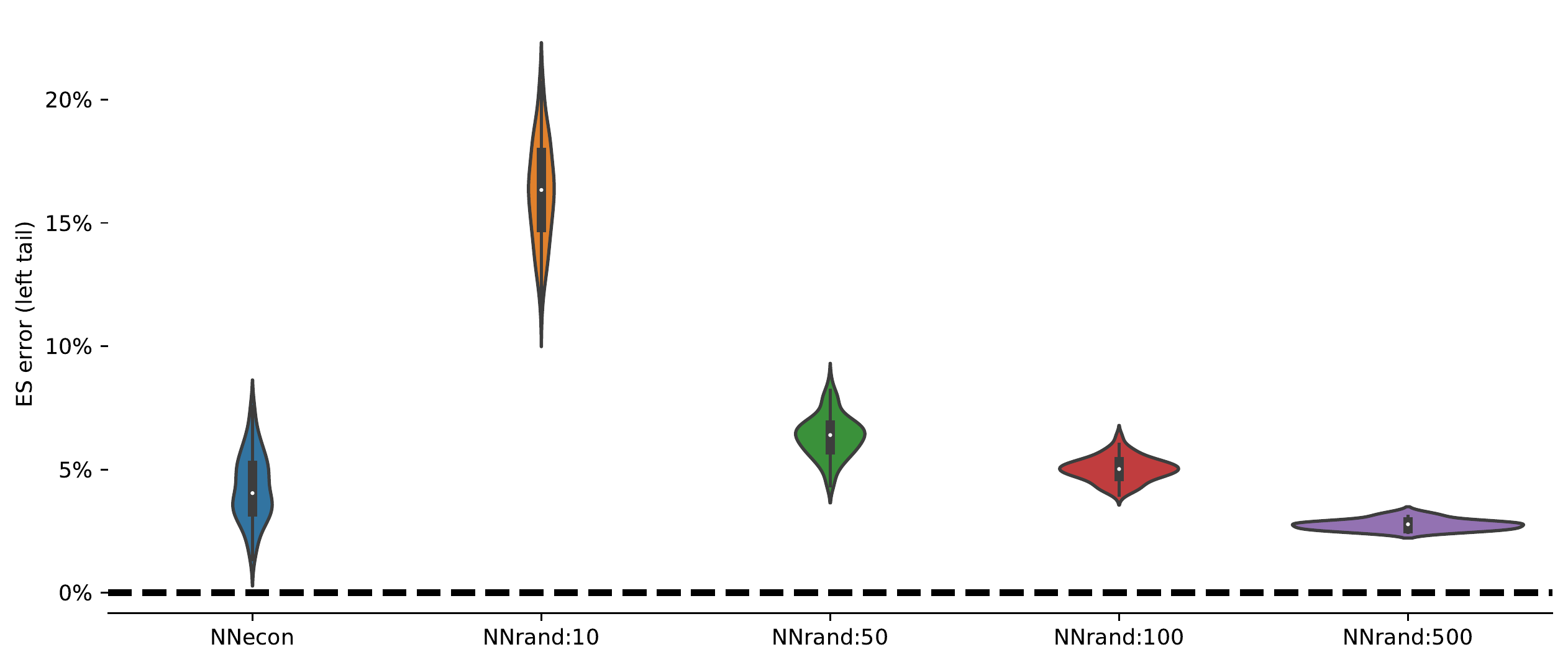}
    \caption{Effect of increasing training sample size on second neural network model. The decrease in mean and standard deviation of errors can be clearly seen as samples are increased from 10,000 to 500,000.}\label{fig:convergence}
    
\end{figure}

Figure \ref{fig:results} shows the empirical distribution of each of the estimators (trained, respectively, on 10,000 samples for comparability). We can observe the relative standard deviations of the estimators and find, as previously seen in Figure \ref{fig:nmc_results}, that nested Monte Carlo has low bias and a very high variance compared to the other methods. Interestingly, the standard deviations of the neural network models are not much higher than that of the replicating portfolios, despite each macro-run using a completely different set of inputs. This indicates that the calibration of the neural networks is robust to the variance of the inputs. In this regard, neural networks have a bad reputation due to their non-convex loss function. A common problem associated with the training of neural networks is that of local minima. Together with the common use of stochastic optimization methods (such as stochastic gradient descent), this usually contributes to a high variance of predictions. In this paper, we have taken certain steps to reduce this problem, by a) using a network of a small size (100 nodes) and b) using the Broyden–Fletcher–Goldfarb–Shanno (BFGS) optimization algorithm, which is non-stochastic. These decisions contribute to keeping the variance of the estimator within reasonable levels.

\begin{figure}[h]
    \centering
    
    \includegraphics[width=\columnwidth]{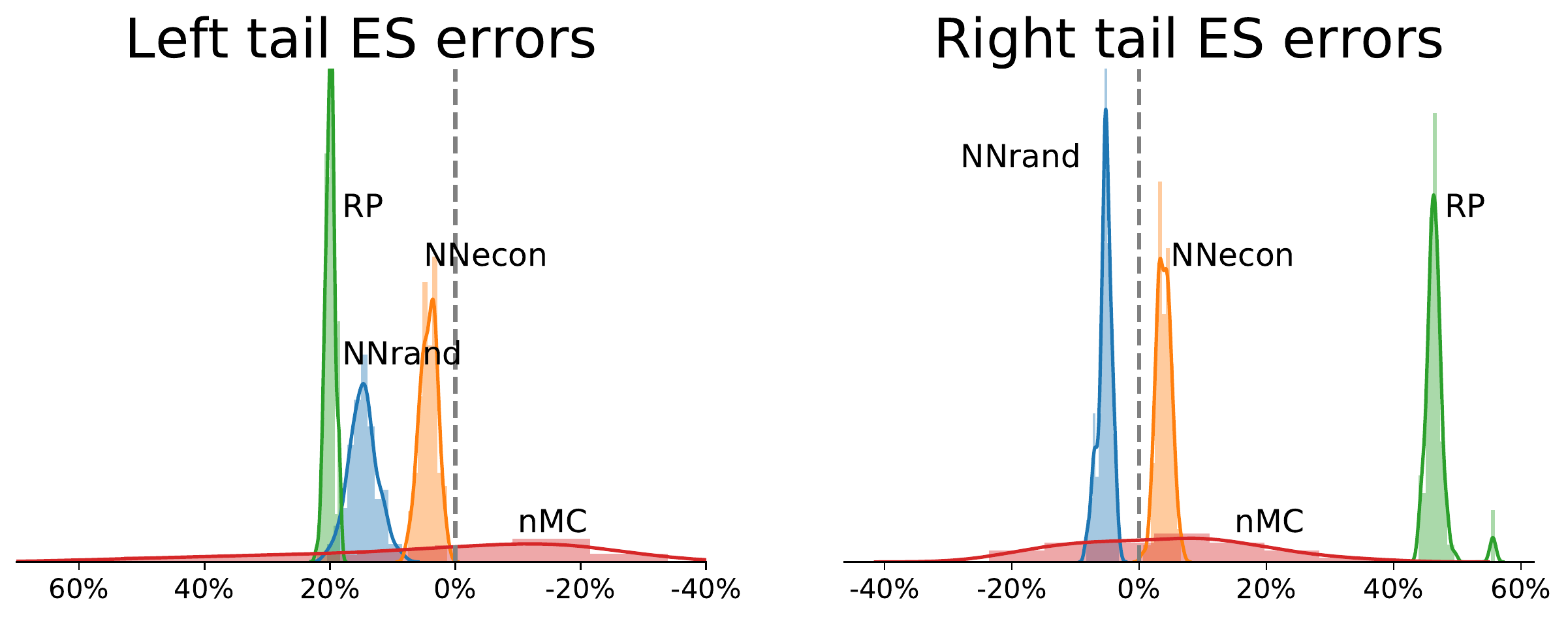}
    \caption{Empirical distribution of estimators (with 10,000 training samples). Each neural network model delivers more accurate results than the benchmark models.}\label{fig:results}
    
\end{figure}

\subsection{Runtime}
As shown in Table \ref{tab:results} and Figure \ref{fig:results}, the neural network models perform very well in terms of quality. Since these models require a non-linear regression to find their parameters, they can be expected to be slower than a purely linear model, such as a replicating portfolio. Table \ref{tab:runtime} shows how long it takes to run each model \say{end to end}---that is, including feature generation, calibration and prediction.

\begin{table}[h]
\centering
\small
\caption{Comparison of runtime (in seconds) for training data sets of different sizes} \label{tab:runtime}

    \begin{tabular}{@{}rrrr@{}}
    \toprule
    \textbf{Samples} & \textbf{Neural net (econ)} & \textbf{Neural net (rand)} & \textbf{Rep. portfolio} \\
    \midrule
    \textbf{2500   } &     112 &     115 &  4 \\
    \textbf{5000   } &     202 &     200 &  4 \\
    \textbf{10000  } &     384 &     406 &  6 \\
    \textbf{20000  } &     707 &     711 & 13 \\
    \textbf{40000  } &    1458 &    1407 & 26 \\
    \bottomrule
    \end{tabular}

\end{table}

We can see that training a neural network model takes longer than training a replicating portfolio. Both types of model scale approximately linearly to the size of the training set. However, even at the far end (40,000 training samples) the neural network models do not take more than 30 minutes (on a single-core AMD Opteron 6380) which is perfectly acceptable from a practitioner's point of view. Generating the inputs required for economic capital calculations can normally take from several hours to several days; and 30 minutes can therefore fit easily within the normal production schedule of an insurance company.

\section{Conclusions}
Based on a simulated insurance product, we have presented a comparison of nested Monte Carlo, replicating portfolios, and neural networks as methods for calculating solvency capital. The numerical experiments show that neural networks perform very well, even in a highly non-linear problem with a small number of training samples. The mean errors are the lowest of the group and the distributions of the results do not show qualitatively, large variance. A qualitative analysis suggests that the neural network model can also have advantages in terms of model simplicity.

In the construction of this neural network model we make use of what we believe is a novel formula for calculating the risk-neutral price of a neural network.
Additionally, we make two contributions for other researchers in the field: a description of an automated replicating portfolio model (necessary for reliable comparisons) and a full data set and software library for the production of economic scenarios.

\section*{Acknowledgements}
This paper is based on our presentation at the 2019 Insurance Data Science Conference (ETH Zurich). We thank the scientific committee for its invitation to present, and all those that provided helpful comments during the preparation: Mariana Andres, Dr Gregor Reich, and the participants of the Quantitative Business Administration PhD Seminar. Special thanks go to Prof. Karl Schmedders and Prof. Damir Filipovic, for reviewing the presentation and for all we have learnt from them, without which this paper would not be possible.

\clearpage
\bibliographystyle{plain}
\bibliography{bibtex}

\end{document}